\documentclass[12pt,final,
%spanish,
thmsa,%titlepage,
epsf,a4paper]%{article}
{article}
\linespread{1.2}

%Tablas lindas
\usepackage{booktabs}

%---------------------------------------------

\newtheorem{theorem}{Theorem}%[chapter]

%[chapter]
%[chapter]
%[chapter]
\newtheorem{corollary}{Corollary}%[chapter]
%[chapter]
\newtheorem{definition}{Definition}%[chapter]
%[chapter]
\newtheorem{example}{Example}%[chapter]
%[chapter]
\newtheorem{lemma}{Lemma}%[chapter]
%[chapter]
%[chapter]
\newtheorem{proposition}{Proposition}%[chapter]
\newtheorem{remark}{Remark}%[chapter]
%[chapter]
%[chapter]
%\newenvironment{proof}[1]{Prueba: #1
%\ $\square$}

\newenvironment{proof}[1][Proof]{\emph{#1.} }{\  \hfill $\square $ \vspace{5 pt}}

\usepackage{stackrel}
\usepackage{natbib}
\usepackage{amssymb}
\usepackage[dvips]{graphics}
\usepackage{amsmath}

\usepackage{mathpazo}

\usepackage{enumerate}

%%%%%%%%%%%%%%%%%%%%

\usepackage{amsfonts}

\usepackage{amssymb}

\usepackage{enumerate}

\usepackage{mathrsfs}

\usepackage[usenames]{color}

\usepackage{pgf,tikz}

\setcounter{MaxMatrixCols}{10}

\usetikzlibrary{decorations.pathreplacing,angles,quotes}
\usetikzlibrary{arrows.meta}
\usetikzlibrary{arrows, decorations.markings}

%%%%%%%%%%%%%%%%%%%%

%\usepackage{color}   %May be necessary if you want to color links
\usepackage{hyperref}
\hypersetup{
    colorlinks,
    citecolor=blue,
    filecolor=black,
    linkcolor=blue,
    urlcolor=blue
}
%%%%%%%%%%%%%%%%%%%%%%

%\usepackage{authblk}
\usepackage{soul}
\usepackage{pgf,tikz}
\usetikzlibrary{arrows}

\usepackage[utf8]{inputenc}
\usepackage{amssymb, amsmath,geometry}
\geometry{tmargin=3 cm, lmargin=2.5 cm, rmargin=2.5 cm, bmargin=3
cm}

%------------------------------------------------

%Esto cambia los headers a minúscula
\usepackage{fancyhdr}
%\renewcommand{\chaptermark}[1]{\markboth{#1}{}}

%---------------------------------------------
%Encabezado y pie de página

%\pagestyle{fancy}
%\fancyhf{}
%\fancyhead[LE,RO]{\thepage}
%\fancyhead[LO]{\normalfont\nouppercase{\rightmark}}
%\fancyhead[RE]{\normalfont\nouppercase{\leftmark}}
%---------------------------------------------
\usepackage{soul}
\usepackage{cancel}

\usepackage{emptypage}

%Cambié \itshape por \normalfont

%%%%%%%%%%%%%%%%%%%%%%%%%%

%%%ESTO ES PARA PONER LA MISMA AFILIACION A DISTINTOS AUTORES

\newcommand*\samethanks[1][\value{footnote}]{\footnotemark[#1]}

\begin{document}

\title{The lattice of worker-quasi-stable matchings\thanks{We thank Jordi Massó, Alejandro Neme, Changyong Hu, and  anonymous referees  
for their very detailed comments. We acknowledge financial support
from UNSL through grants 032016 and 030320, and from Consejo Nacional
de Investigaciones Cient\'{\i}ficas y T\'{e}cnicas (CONICET) through grant
PIP 112-200801-00655, and from Agencia Nacional de Promoción Cient\'ifica y Tecnológica through grant PICT 2017-2355.}}
%\subtitle{Do you have a subtitle?\\ If so, write it here}

%\titlerunning{Short form of title}        % if too long for running head

\author{Agustín G. Bonifacio\thanks{Instituto de Matem\'{a}tica Aplicada San Luis, Universidad Nacional de San
Luis and CONICET, San Luis, Argentina, and RedNIE. Emails: \texttt{abonifacio@unsl.edu.ar} (A. G. Bonifacio), \texttt{ncguinazu@unsl.edu.ar} (N. Gui\~{n}azu), \texttt{nmjuarez@unsl.edu.ar} (N. Juarez), \texttt{paneme@unsl.edu.ar} (P. Neme) and \texttt{joviedo@unsl.edu.ar} (J. Oviedo).} \and  Nadia Gui\~{n}az\'u \samethanks[2] \and Noelia Juarez\samethanks[2] \and Pablo Neme\samethanks[2] \and Jorge Oviedo\samethanks[2]}
%\authorrunning{Short form of author list} % if too long for running head

\date{\today}
\maketitle

\begin{abstract}

In a many-to-one matching model, we study the set of worker-quasi-stable matchings when firms' choice functions satisfy substitutability. Worker-quasi-stability is a relaxation of stability that allows blocking pairs involving a firm and an unemployed worker. We show that this set has a lattice structure and define a Tarski operator on this lattice that models a re-equilibration process and  has the set of stable matchings as its fixed points.  

\bigskip

\noindent \emph{JEL classification:} C78, D47.\bigskip

\noindent \emph{Keywords:} Matching, worker-quasi-stability, lattice, Tarski operator, re-equilibration process. 

\end{abstract}

\section{Introduction}
\label{intro}

In this paper, we study a many-to-one matching model in which agents in one side of the market (that we call \emph{firms}) have to be assigned to subsets of agents on the other side of the market (that we call \emph{workers}) and the only requirement on subsets of workers that each firm's choice function has to satisfy is substitutability. For this model, using a partial order first studied by \cite{blair1988lattice},  we show that the set of worker-quasi-stable matchings  has a lattice structure.\footnote{The lattice structure of the set of stable matchings is introduced by \cite{knuth1976marriages} for the one-to-one matching model. This result is generalized in different directions by several papers \citep[see for instance][among others]{blair1988lattice,marti2001lattice,alkan2002class,wu2018lattice}.} Worker-quasi-stability is a relaxation of stability that allows blocking pairs involving a firm and an unemployed worker.\footnote{In the many-to-one matching literature, stability can be thought of as the conjunction of ``envy-freeness'' and ``non-wastefulness'' \citep[see, for example,][]{Kamada2019fair,wu2018lattice}. The presence of blocking pairs involving an unemployed worker and an acceptant firm can be interpreted as wastefulness of the matching. } The importance of these matchings is twofold. First, from a practical standpoint,  worker-quasi-stability  captures an interim situation when, starting from a stable matching, new workers arrive or firms downsize and laid off workers have to look for employment elsewhere. Furthermore, if our aim is to pursue stability, a re-equilibration process  can be described as a layoff  chain dynamic within worker-quasi-stable matchings that brings the market back to stability. Second, from a theoretical standpoint, much of the literature that studies stability through Tarski's fixed point theorem \citep{tarski1955lattice}  carries out its analysis by means of a lattice that strictly contains the  set of matchings \citep[see][among others]{adachi2000characterization, fleiner2003fixed, echenique2004core}.\footnote{Most of these  papers rely on the notion of ``pre-matching''. In a pre-matching,   the fact that agent $a$ is matched to agent $b$ does not imply that $b$ is matched to $a.$} However, we show that a fixed point approach can be performed within the realm of worker-quasi-stability.

Our general framework assumes substitutability on firms' choice functions. This condition, first introduced by \cite{kelso1982job}, is the less restrictive requirement in firms' choice functions in order to guarantee the existence of stable matchings. A firm has substitutable choice functions if it wants to continue hiring a worker even if other workers become unavailable.  
\cite{blair1988lattice} defines a partial order over the set of matchings, and shows that when choice functions are substitutable, the set of stable matchings has a lattice structure. A matching Blair-dominates another matching   if each firm wishes to keep the workers hired under the first one even if all the workers hired under the second one are also available, and does not wish to hire any new worker.
 In our paper, given two worker-quasi-stable matchings, we define a new one by means of a choice function   that selects, for each firm, the best subset of workers among those that this firm is matched to in either matching. This new matching turns out to be the join (least upper bound) of the two original matchings according to Blair's order. In this way, we extend Blair's result to the whole set of worker-quasi-stable matchings. More specifically, we prove that the set of worker-quasi-stable matchings forms a finite join-semilattice with a minimum element, implying that it is a lattice.

Furthermore, we define a Tarski operator in the worker-quasi-stable lattice that describes a possible re-equilibration process. This process models how,  
starting from \textit{any} worker-quasi-stable matching,  a decentralized sequence of offers in which unemployed workers are hired (causing  new unemployments), produces a sequence of worker-quasi-stable matchings that converges to a stable matching. As a by-product, applying Tarski's fixed point theorem to our operator, we give an alternative proof of the fact that the set of stable matchings (the fixed point set of our operator) is non-empty and has a lattice structure as well. 
Finally, we present some additional results when   firms' choice functions satisfy, in addition to substitutability, the ``law of aggregate demand".\footnote{This property is first studied by \cite{alkan2002class} under the name of ``cardinal monotonicity". See also \cite{hatfield2005matching}.}  This condition says that when a firm chooses from an expanded set, it hires at least as many workers as before. 
Under the ``law of aggregate demand" we can  identify the fixed point of the operator that can be obtained by iterating it starting at a worker-quasi-stable matching: it is the join of that worker-quasi-stable matching and the worker-optimal stable matching. We also show that (i) the join of a worker-quasi-stable matching  and a stable matching is stable, and (ii) every worker-quasi-stable matching that weakly Blair-dominates the worker-optimal
stable matching is stable.

The paper closest to ours is \cite{wu2018lattice}. In a many-to-one matching model  in which firms have responsive preferences (a more restrictive requirement than substitutable choice functions), they obtain a lattice structure  for the set of firm-quasi-stable matchings\footnote{Firm-quasi-stability (called \textit{envy-freeness} by \cite{wu2018lattice}) is a relaxation of stability that allows blocking pairs involving a worker and an empty position of a firm.} under the common preference of workers. Given two matchings, the Conway-like join for workers that they use matches all workers to their most preferred firm between their two original  partners. Their paper is the first one to present a Tarski operator defined on a lattice of \emph{matchings} (in their case, the set of firm-quasi-stable matchings), that can be interpreted as modeling vacancy chains, and  show that the operator has the set of stable matchings as its fixed points.  Concerning the set of worker-quasi-stable matchings, they show the difficulties of defining a Conway-like join for firms even when firms have responsive preferences. Instead, we are able to sidestep this problem following Blair's insight.    

Another paper that relates a Tarski operator with the notion of firm-quasi-stability is \cite{Kamada2019fair}. In a school choice setting with constraints, they  define a Tarski operator over a space of ``cutoff profiles'', and use it to characterize the  firm-quasi-stable matchings\footnote{For \cite{Kamada2019fair}, a firm-quasi-stable matching is an \emph{envy-free} matching that fulfills a pre-specified constraint.} as its fixed points. Thus, their approach is tangential to ours and Wu and Roth's. 

The rest of the paper is organized as follows. In Section \ref{preliminar}, we present the model and preliminaries. The lattice structure of the worker-quasi-stable set is analyzed in Section \ref{lattice}. In Section \ref{seccion operador }, we introduce our Tarski operator, which allows us to prove that the set of stable matchings is non-empty and forms a lattice when firms' choice functions satisfy substitutability. Moreover, we present a re-equilibration process via layoff chains based on our operator. 
Further results that give some insight on the behavior of the Tarski operator are gathered in Section \ref{mas resultados LAD}, where in addition to substitutability we require firms' choice functions to satisfy the ``law of aggregate demand''. Finally, in Section \ref{concludings}, we present some conclusions.

\section{Model and preliminaries}\label{preliminar}

We consider a many-to-one matching model where there are two disjoint sets of agents: the set of \textit{firms} $F$ and the set of \textit{workers} $W$.  Each worker $w\in W$ has a strict preference relation $\succ_w$ over the individual firms and the prospect of being unmatched,  denoted by $\emptyset$. Each firm $f\in F$ has a choice function $C_f$ over the set of all subsets of $W$ that satisfies \textbf{substitutability}, i.e.,   for $S'\subseteq S \subseteq W$, we have $C_f(S)\cap S' \subseteq C_f(S')$.\footnote{Substitutability is equivalent to the following: for each $w\in W$ and each $S\subseteq W$ such that $w\in S$, $w\in C_f\left(S\right)$ implies that $w\in C_f\left(S'\cup \{w\}\right)$ for each $S' \subseteq S.$} In addition, we assume that $C_f$ satisfies $C_f(S')=C_f(S)$ whenever $C_f(S)\subseteq S' \subseteq S \subseteq W.$ This property is known in the literature as \textbf{consistency}. If $C_f$  satisfies substitutability and consistency, then it also satisfies \begin{equation}\label{propiedad de choice}
C_f\left(S\cup S'\right)=C_f\left(C_f\left(S\right)\cup S'\right)
\end{equation} for each pair of subsets $S$ and $S'$ of $W$.\footnote{This property is known in the literature as \emph{path independence} \citep[see][]{alkan2002class}.}

Let $\succ_W$ be the preference profile for all workers, and let $C_F$ be the profile of choice functions for all firms. A \emph{many-to-one matching market} is denoted by $(W,F,\succ_W,C_F).$ 

\begin{definition}
A \textbf{matching} $\mu$ is a function from  set $F\cup W$ into $2^{F\cup W}$ such that, for each $w\in W$ and  each $f\in F$:
\begin{enumerate}[(i)]
\item $\mu(w)\subseteq F$ with $|\mu(w)|\leq 1.$ 
\item $\mu(f)\subseteq W$.
\item $w\in \mu(f)$ if and only if $\mu(w)=\{f\}$. 
\end{enumerate}
\end{definition}
Usually, we will omit the curly brackets. For instance, instead of condition (iii) we will write: ``$w\in \mu(f)$ if and only if $\mu(w)=f$''.

Agent $a\in F\cup W$ is \textbf{matched} if $\mu(a) \neq \emptyset$, otherwise  $a$ is \textbf{unmatched}.
A matching $\mu$ is \textbf{blocked by a worker $\boldsymbol{w}$} if $\emptyset \succ_w \mu(w)$; that is, worker $w$ prefers being unemployed  rather than working  for firm $\mu(w)$. Similarly, $\mu$ is \textbf{blocked by a firm $\boldsymbol{f}$} if $\mu(f)\neq C_f\left(\mu(f)\right)$; that is, firm $f$ wants to fire some workers in $\mu(f)$. A matching is \textbf{individually rational} if it is not blocked by any individual agent.

A matching $\mu$ is \textbf{blocked by a firm-worker pair $\boldsymbol{(f,w)}$} if   $w \in C_f\left(\mu( f )\cup \{w\}\right),$  and $f \succ_w \mu( w )$; that is, if they are not matched through $\mu$, firm $f$ wants to hire $w$, and worker $w$ prefers firm $f$ rather than $\mu(w)$.  A matching $\mu$ is \textbf{stable} if it is individually rational and it is not blocked by any firm-worker pair. The set of stable matchings for market $(W,F,\succ_W,C_F)$ is denoted by $\boldsymbol{\mathcal{S}}.$ A matching is \textbf{worker-quasi-stable} if it is individually rational and each firm-worker blocking pair $(f,w)$ satisfies that $\mu (w)=\emptyset.$\footnote{The notion of worker-quasi-stable matching in many-to-one models generalizes the notion of ``simple" matching in one-to-one models studied by \cite{sotomayor1996non}. A one-to-one matching is \emph{simple} if, in the case of a blocking pair $(f,w)$ exists,  $\mu(w)=\emptyset$.} Let $\boldsymbol{\mathcal{Q}}$ denote  the set of worker-quasi-stable matchings for market $(W,F,\succ_W,C_F).$ Notice that, for each market $(W,F,\succ_W,C_F)$,  the set $\mathcal{Q}$ is always non-empty since the empty matching in which each agent is unmatched belongs to this set.

\cite{blair1988lattice} defines a partial order over matchings in which a matching dominates another matching  if each firm wishes to keep the workers hired under the first one, even if all the workers hired under the second one are also available, and do not wish to hire any new worker. Formally, given two sets of workers $S,T\in 2^W$, we write  $\boldsymbol{S \succeq^B_f T}$ when $S=C_f\left(S \cup T\right)$. We also write: $\boldsymbol{S \succ^B_f T}$ when $S \succeq^B_f T$ and $S \neq T$. Furthermore, given two matchings $\mu$ and $\mu'$, we say that \textbf{$\boldsymbol{\mu$ weakly Blair-dominates $\mu'}$}, and  write    $\boldsymbol{\mu \succeq^B \mu'},$ when $\mu(f) \succeq^B_f \mu'(f)$ for each $f \in F.$ If $\mu \succeq^B \mu'$ and $\mu \neq \mu',$  we say that \textbf{$\boldsymbol{\mu$ Blair-dominates $\mu'}$} and  write  $\boldsymbol{\mu \succ^B \mu'}.$

\section{Lattice structure}\label{lattice}

In this section, we prove that the set of worker-quasi-stable matchings forms a lattice under the partial order  $\succeq^B$. Formally,

\begin{theorem}\label{teorema lattice}
The set of worker-quasi-stable matchings is a lattice under the partial order $\succeq^B$.
\end{theorem}
The rest of the section is devoted to proving this theorem. In order to do so, we need to construct the \textit{join} of two worker-quasi-stable matchings.\footnote{Given a partially ordered set $(\mathcal{L},\succeq)$, and two elements $x,y\in \mathcal{L}$, an element $z\in \mathcal{L}$ is an \textit{upper bound} of $x$ and $y$ if $z\succeq x$ and $z\succeq y$. An element $w\in \mathcal{L}$ is the \textit{join} of $x$ and $y$ if and only if (i) $w$ is an upper bound of $x$ and $y$, and (ii) $t\succeq w$ for each upper bound $t$ of $x$ and $y$. The definitions of \textit{lower bound} and \textit{meet} of $x$ and $y$ are dual and we omit them.}
Given two worker-quasi-stable matchings $\mu$ and $\mu',$ we define a function $\lambda_{\mu,\mu'}:F\cup W \to 2^{F\cup W}$ as follows:
\begin{enumerate}[(i)]
\item for each $f\in F,$ $\lambda_{\mu,\mu'} (f)=C_f\left(\mu (f)\cup \mu ^{\prime }(f)\right),$
\item for each $w \in W,$ $\lambda_{\mu,\mu'} (w)=\{f\in F : w\in \lambda_{\mu,\mu'}(f)\}.$
\end{enumerate}

\noindent  Notice that by item (i),  if $\mu(f)=\mu'(f)=\emptyset$, then $\lambda_{\mu,\mu'}(f)=\emptyset.$ Under $\lambda_{\mu,\mu'},$ (i) firms want to  hire the best subset of workers  among those hired by them in either matching, and  (ii) workers agree with the firms that want  to employ them.   The following lemma shows that $\lambda_{\mu,\mu'}$ is well-defined (i.e., it is a matching) and, furthermore, that it is worker-quasi-stable.

\begin{lemma}\label{lambda es wqs}
If $\mu$ and $\mu'$ are two worker-quasi-stable matchings, then $\lambda_{\mu,\mu'} $ is a worker-quasi-stable matching.
\end{lemma}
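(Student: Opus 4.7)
My plan is to verify in order the three properties required of $\lambda$: that it is a well-defined matching, that it is individually rational, and that every firm-worker pair blocking $\lambda$ has the worker unmatched. Given conditions (i) and (ii) in the definition of $\lambda$, being a well-defined matching reduces to showing that no worker lies in $\lambda(f)\cap\lambda(f')$ for two distinct firms $f,f'$. This step is where substitutability and the worker-quasi-stability of \emph{both} $\mu$ and $\mu'$ will be used simultaneously, and is the one I expect to be the main obstacle.

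For well-definedness, assume for contradiction that $w\in\lambda(f)\cap\lambda(f')$ with $f\neq f'$. By construction, $w\in(\mu(f)\cup\mu'(f))\cap(\mu(f')\cup\mu'(f'))$. Since $\mu$ and $\mu'$ are themselves matchings, the only surviving case (up to exchanging the roles of $\mu$ and $\mu'$) is $\mu(w)=f$ and $\mu'(w)=f'$. Applying substitutability to $w\in Ch_f(\mu(f)\cup\mu'(f))$ with $S'=\mu'(f)$, and to $w\in Ch_{f'}(\mu(f')\cup\mu'(f'))$ with $S'=\mu(f')$, yields $w\in Ch_f(\mu'(f)\cup\{w\})$ and $w\in Ch_{f'}(\mu(f')\cup\{w\})$. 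Since $\succ_w$ is strict, exactly one of $f\succ_w f'$ or $f'\succ_w f$ holds. In the first case $(f,w)$ blocks $\mu'$ with $\mu'(w)=f'\neq\emptyset$; in the second case $(f',w)$ blocks $\mu$ with $\mu(w)=f\neq\emptyset$. Either alternative contradicts worker-quasi-stability, so the situation cannot occur.

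Individual rationality is then quick. For each firm $f$, equation~\eqref{propiedad de choice} with $S'=\emptyset$ gives $Ch_f(\lambda(f))=\lambda(f)$, so no firm wishes to fire anyone. For each worker $w$ with $\lambda(w)=f$, the inclusion $w\in\lambda(f)\subseteq\mu(f)\cup\mu'(f)$ forces $\mu(w)=f$ or $\mu'(w)=f$, and individual rationality of the matching involved yields $f\succ_w\emptyset$.

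Finally, to control blocking pairs, suppose $(f,w)$ blocks $\lambda$ with $\lambda(w)=f'\neq\emptyset$. Then $w\in\lambda(f')$, and without loss of generality $\mu(w)=f'$. By~\eqref{propiedad de choice}, $Ch_f(\lambda(f)\cup\{w\})=Ch_f(\mu(f)\cup\mu'(f)\cup\{w\})$, so the blocking condition combined with substitutability (applied with $S'=\mu(f)$) produces $w\in Ch_f(\mu(f)\cup\{w\})$. Together with $w\notin\mu(f)$ and $f\succ_w\lambda(w)=\mu(w)$, this says that $(f,w)$ blocks $\mu$ while $\mu(w)\neq\emptyset$, contradicting worker-quasi-stability of $\mu$. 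Hence $\lambda(w)=\emptyset$, which finishes the plan.
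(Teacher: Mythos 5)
Your proof is correct and follows essentially the same route as the paper's: the same three-step structure (well-definedness via the case $\mu(w)=f$, $\mu'(w)=f'$, individual rationality, and reduction of a blocking pair of $\lambda$ to a blocking pair of $\mu$ via \eqref{propiedad de choice} and substitutability), with only cosmetic differences such as phrasing the well-definedness contradiction through a case split on $\succ_w$ rather than through two incompatible strict comparisons.
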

\begin{proof}Let  $\mu,\mu' \in \mathcal{Q}$. First, we show $\lambda_{\mu,\mu'} $ is a matching. By definition of $\lambda_{\mu,\mu'} $ we have  $\lambda_{\mu,\mu'}(w)\subset F$ for each $w\in W$ and $\lambda_{\mu,\mu'}(f)\subset W$ for each $f\in F$. For $\lambda_{\mu,\mu'}$ to be matching, it is necessary to show that $|\lambda_{\mu,\mu'}(w)|\leq 1$ for each $w\in W$.  Assume that there is $w\in W$ such that $|\lambda_{\mu,\mu'}(w)|>1$. Thus, there are  $f,f'\in F$ with $f\neq f'$ such that $w\in\lambda_{\mu,\mu'}(f)$ and $w \in \lambda_{\mu,\mu'}(f')$.   Given that $\mu$ and $\mu'$ are matchings, w.l.o.g., assume that $w\in\mu(f)$ and $w \in \mu'(f').$ Thus, $\mu(w)=f$ and $\mu'(w)=f'$.
Since $w\in  \lambda_{\mu,\mu'}(f)=C_f\left(\mu(f)\cup\mu'(f)\right)$ then, by substitutability, $w\in  C_f\left(\mu'(f)\cup\{w\}\right).$ As $\mu'\in \mathcal{Q}$ and $\mu'(w)=f',$ then $(f,w)$ is not a blocking pair for $\mu'$. Therefore, 
\begin{equation}\label{ecu 1 lambda matching}
 \mu'(w)=f'\succ_w f.
\end{equation} 
By analogous reasoning, $w\in  \lambda_{\mu,\mu'}(f')$ implies 
\begin{equation}\label{ecu 2 lambda matching}
\mu(w)= f\succ_w f'.
\end{equation}
By (\ref{ecu 1 lambda matching}) and (\ref{ecu 2 lambda matching}) we get a contradiction. Therefore,  $|\lambda_{\mu,\mu'}(w)|\leq 1$ and $\lambda_{\mu,\mu'}$ is a matching.

Second, we show that $\lambda_{\mu,\mu'} $ is an individually rational matching. By \eqref{propiedad de choice}, for any $f\in F$ and $S \subseteq W$, $C_{f}\left( C_{f}\left(
S\right) \right) =C_{f}\left( S\right).$ Thus, $C_{f}\left( \lambda
_{\mu ,\mu ^{\prime }}\left( f\right) \right) =C_{f}\left( C_{f}\left( \mu
\left( f\right) \cup \mu'\left( f\right) \right) \right)
=C_{f}\left( \mu \left( f\right) \cup \mu' \left( f\right) \right)
=\lambda _{\mu ,\mu'}\left( f\right),$ and $\lambda _{\mu ,\mu'}$ is not blocked by any firm.
By definition of $\lambda_{\mu,\mu'}$ , $w\in \mu(f)$ or $w\in \mu'(f)$. Thus, $f=\mu(w)$ or $f= \mu'(w)$. Since $\mu$ and $\mu'$ are individually rational matchings, $\mu(w)\succ_{w}  \emptyset$ and $ \mu'(w)\succ_{w}  \emptyset$. Therefore, $f\succ_{w}  \emptyset$ and $\lambda _{\mu ,\mu'}$ is not blocked by any worker. This implies that $\lambda _{\mu ,\mu'}$ is an individually rational matching.

Finally, we show that $\lambda_{\mu,\mu'} $ is worker-quasi-stable. Assume that $\lambda_{\mu,\mu'}$ is not a worker-quasi-stable matching. Then, there is a blocking pair $(f,w)$ for $\lambda_{\mu,\mu'}$   and  $\lambda_{\mu,\mu'} (w)\neq \emptyset.$  Assume, w.l.o.g., that $\lambda_{\mu,\mu'} (w)=\mu(w).$ Recall that,  since $(f,w)$ is a blocking pair for $\lambda_{\mu,\mu'},$ $w\in C_f \left( \lambda_{\mu,\mu'}(f)\cup\{w\}\right),$ and $f\succ_w \lambda_{\mu,\mu'}(w).$ By definition of $\lambda_{\mu,\mu'}$, 

$$
w\in C_f\left( C_f\left(\mu(f)\cup\mu'(f) \right)\cup\{w\}\right).
$$
By  \eqref{propiedad de choice},  $w\in C_f\left( \mu(f)\cup\mu'(f)\cup\{w\}\right).$ Since firms have substitutable choice functions, $w\in C_f\left( \mu(f)\cup\{w\}\right).$ Recall that $f\succ_w \lambda_{\mu,\mu'}(w)=\mu(w).$  Therefore, $(f,w)$ is also a blocking pair for $\mu.$ Since we assumed that $\mu(w)=\lambda_{\mu,\mu'}(w)\neq \emptyset$, we contradict the fact that $\mu$ is, by hypothesis, a worker-quasi-stable matching.
\end{proof}

Next, we show that $\lambda_{\mu,\mu'}$ is indeed the join of $\mu$ and $\mu'$ according to $\succeq ^B$.
Let  $\mu,\mu' \in \mathcal{Q}$ and $f\in F$. 

\begin{lemma}\label{lema min cota sup}
If $\mu$ and $\mu'$ are two worker-quasi-stable matchings, then $\lambda_{\mu,\mu'}$ is the join of $\mu$ and $\mu'$.
\end{lemma}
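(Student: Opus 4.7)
My plan is to take an arbitrary upper bound $\nu$ of $\mu$ and $\mu'$ in $\mathcal{WQS}(\succ)$ (with respect to $\succeq_F^B$) and show that $\nu \succeq_F^B \lambda$. Combined with Lemma \ref{lema cota sup}, this will establish that $\lambda$ is the least upper bound.

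Fix $f \in F$. By hypothesis, $\nu(f) = Ch_f(\nu(f) \cup \mu(f))$ and $\nu(f) = Ch_f(\nu(f) \cup \mu'(f))$. The goal is to verify that $\nu(f) = Ch_f(\nu(f) \cup \lambda(f))$. By definition of $\lambda$, the right-hand side equals $Ch_f\bigl(\nu(f) \cup Ch_f(\mu(f) \cup \mu'(f))\bigr)$, so the key tool is the substitutability identity \eqref{propiedad de choice}, which lets me absorb and peel off nested choice operators.

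The main computation proceeds by applying \eqref{propiedad de choice} twice. First, apply it to merge the inner choice into a flat union:
\[
Ch_f\bigl(\nu(f) \cup Ch_f(\mu(f) \cup \mu'(f))\bigr) = Ch_f\bigl(\nu(f) \cup \mu(f) \cup \mu'(f)\bigr).
\]
Then apply it again, this time using the fact that $\nu(f) = Ch_f(\nu(f) \cup \mu(f))$, to replace the union $\nu(f) \cup \mu(f)$ by $\nu(f)$:
\[
Ch_f\bigl(\nu(f) \cup \mu(f) \cup \mu'(f)\bigr) = Ch_f\bigl(Ch_f(\nu(f) \cup \mu(f)) \cup \mu'(f)\bigr) = Ch_f\bigl(\nu(f) \cup \mu'(f)\bigr) = \nu(f),
\]
where the last equality uses the second hypothesis $\nu(f) = Ch_f(\nu(f) \cup \mu'(f))$. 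Hence $\nu(f) \succeq_f^B \lambda(f)$ for every $f \in F$, i.e., $\nu \succeq_F^B \lambda$.

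I do not anticipate real obstacles here: all the structural work (that $\lambda$ is a matching, is worker-quasi-stable, and is an upper bound) has already been carried out in Lemmas \ref{lambda es wqs} and \ref{lema cota sup}, and minimality reduces to two invocations of the substitutability identity \eqref{propiedad de choice}. The only thing to be mindful of is the direction of the order: $\succeq_F^B$ is defined via the choice operator, so I must check $\nu(f) = Ch_f(\nu(f) \cup \lambda(f))$ rather than anything involving $Ch_f(\lambda(f))$, which is exactly what the computation above delivers.
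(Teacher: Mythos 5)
Your proof is correct and takes essentially the same route as the paper: the identical double application of \eqref{propiedad de choice}, combined with the upper-bound hypotheses $\nu(f)=Ch_f\left(\nu(f)\cup\mu(f)\right)$ and $\nu(f)=Ch_f\left(\nu(f)\cup\mu'(f)\right)$, to collapse $Ch_f\left(\nu(f)\cup\lambda(f)\right)$ to $\nu(f)$. The only difference is organizational: you show directly that every upper bound $\nu$ satisfies $\nu\succeq_F^B\lambda$, whereas the paper embeds the same computation in a proof by contradiction (invoking $\lambda\succ_F^B\nu$); your direct version is, if anything, slightly cleaner.
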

\begin{proof}Let  $\mu,\mu' \in \mathcal{Q}.$ We know by Lemma \ref{lambda es wqs}, that $\lambda_{\mu,\mu'}\in \mathcal{Q}.$   
First, we prove that $\lambda_{\mu,\mu'}$ is an upper bound of $\mu$ and $\mu'$.
By definition of $\lambda_{\mu,\mu'}$ and  \eqref{propiedad de choice}, for each $f\in F$,
$$C_f \left(\lambda_{\mu,\mu'}(f) \cup \mu(f) \right)=C_f \left(C_f\left(\mu(f) \cup \mu'(f)\right)\cup\mu(f) \right)$$
$$=C_f \left(\mu(f) \cup \mu'(f)\cup \mu(f)\right)=C_f \left(\mu(f) \cup \mu'(f) \right)=\lambda_{\mu,\mu'}(f).$$
This implies that $\lambda_{\mu,\mu'}(f) \succeq_f^B \mu(f)$ for each $f\in F$ and then $\lambda_{\mu,\mu'} \succeq ^B \mu$.  Similarly, $\lambda_{\mu,\mu'} \succeq ^B \mu'$.
Therefore, $\lambda_{\mu,\mu'}$ is an upper bound of  $\mu$ and $\mu'$. Second,  we prove that $\lambda_{\mu,\mu'}$ is the join of $\mu$ and $\mu'$. Let $\nu\in \mathcal{Q}$ be  such that $\nu \succeq^B \mu$ and $\nu \succeq^B  \mu' $. That is, \begin{equation}\label{ecu1 min cot sup}
\nu(f)=C_f \left(\nu(f) \cup \mu(f) \right)\text{~~~and~~~}
\nu(f)=C_f \left(\nu(f) \cup \mu'(f) \right)
\end{equation}
for each $f\in F.$ 
We need to show that $\nu\succeq ^B \lambda_{\mu,\mu'}$, that is, $\nu(f)=C_f \left(\nu(f) \cup \lambda_{\mu,\mu'}(f)) \right)$ for each $f\in F.$ Thus, using repeatedly \eqref{propiedad de choice} and \eqref{ecu1 min cot sup}, and the definition of $\lambda_{\mu,\mu'}$,
$$\nu(f)=C_f \left(\nu(f) \cup \mu(f) \right)
=C_f \left(C_f \left(\nu(f) \cup \mu'(f) \right) \cup \mu(f) \right)$$
$$=C_f \left(\nu(f) \cup \mu'(f) \cup \mu(f) \right)
=C_f \left(\nu(f) \cup C_f \left(\mu'(f)  \cup \mu(f)\right) \right)=C_f \left(\nu(f) \cup \lambda_{\mu,\mu'}(f)) \right)$$ for each $f\in F.$
Thus, $\nu\succeq ^B \lambda_{\mu,\mu'}.$  Therefore, $\lambda_{\mu,\mu'}$ is the join for $\mu$ and $\mu'$.
\end{proof}

From now on, given two worker-quasi-stable matchings $\mu$ and $\mu'$, we denote $\lambda_{\mu,\mu'}$ as $\mu \vee\mu'.$

To finish the proof of Theorem \ref{teorema lattice}, we make three observations. First, by Lemma \ref{lema min cota sup} the set of worker-quasi-stable matchings $\mathcal{Q}$ forms a join-semilattice under the partial order $\succeq^B$.\footnote{A partially ordered set $\mathcal{L}$ is called a \textit{join-semilattice} if any two elements in $\mathcal{L}$ have a join. If any two elements in  $\mathcal{L}$ also have a meet, then $\mathcal{L}$ is called a \textit{lattice} \citep[see][for more details]{stanley1986enumerative}.} Second, the  empty matching $\mu_{\emptyset}$ in which all workers are unmatched (that is by definition a worker-quasi-stable matching) is the minimum element of $\mathcal{Q}$ under the partial order $\succeq^B$. To see this, let $\mu\in\mathcal{Q}$. Since $\mu(f)=C_f(\mu(f) \cup \emptyset)$ for each $f\in F$, it follows that $\mu=\mu \vee \mu_{\emptyset}$. Thus, $\mu \succeq^B \mu_{\emptyset}$ for each $\mu \in \mathcal{Q}.$ 
Finally, given that the set of worker-quasi-stable matchings is finite and is a join-semilattice with a minimum element, it follows that the set of worker-quasi-stable matchings forms a lattice under the partial order $\succeq^B$  \citep[see][for more details]{stanley1986enumerative}. This completes the proof of Theorem \ref{teorema lattice}.

The following example illustrates the lattice structure of the set of worker-quasi-stable matchings. 
\begin{example}\label{ejemplo 1}
Let $(W,F,\succ_W,C_F)$ be a matching market where $W=\{w_1,w_2,w_3,w_4\}$,  $F=\{f_1,f_2\}$,  the preference for the workers are given by:

\noindent\begin{tabular}{l}
$\succ_{w_i}:\{f_1\},\{f_2\},\emptyset~~~~ \text{ for }i=1,2$\\
$\succ_{w_3}:\{f_2\},\{f_1\},\emptyset$\\
$\succ_{w_4}:\{f_2\},\emptyset,$\\
\end{tabular}

\noindent and the choice functions of the firms are given in Table \ref{tabla ejemplo 1}.

\begin{table}[h!]
\begin{center}
\begin{tabular}{|c|c|c|c|c|c|c|c|c|c|c|c|c|c|c|c|} \hline
&$\boldsymbol{1234}$&$\boldsymbol{123}$&$\boldsymbol{234}$&$\boldsymbol{134}$&$\boldsymbol{124}$&$\boldsymbol{12}$&$\boldsymbol{13}$&$\boldsymbol{14}$&$\boldsymbol{23}$&$\boldsymbol{24}$&$\boldsymbol{34}$&$\boldsymbol{1}$&$\boldsymbol{2}$&$\boldsymbol{3}$&$\boldsymbol{4}$  \\ \hline \hline
$\boldsymbol{C_{f_1}}$&$3$&$3$&$3$&$3$&$12$&$12$&$3$&$1$&$3$&$2$&$3$&$1$&$2$&$3$&$\emptyset$  \\ \hline
$\boldsymbol{C_{f_2}}$&$12$&$12$&$24$&$13$&$12$&$12$&$13$&$1$&$2$&$24$&$34$&$1$&$2$&$3$&$4$  \\ \hline
\end{tabular}\label{tabla ejemplo 1}
\caption{Choice functions of the firms}
\end{center}
\end{table}

\noindent For example, $C_{f_1}(\{w_1,w_2,w_3,w_4\})=\{w_3\}$.
We denote each worker-quasi-stable matching as an ordered pair, in which the first component consists of the workers hired by $f_1$, and the second of the workers hired by $f_2$.
In Figure \ref{reticulado1},  the lattice of all  worker-quasi-stable matchings is presented.
\begin{figure}[h!]
\centering
\begin{tikzpicture}[scale=1.1]
\node (0) at (0,0) {\small{$(\emptyset,\emptyset)$}};
\node (1) at (-3,2) {\small{$(w_1,\emptyset)$}};
\node (2) at (-1,2) {\small{$(w_2,\emptyset)$}};
\node (3) at (1,2) {\small{$(\emptyset,w_3)$}};
\node (4) at (3,2) {\small{$(\emptyset,w_4)$}};
\node (5) at (-5,4) {\small{$(w_1w_2,\emptyset)$}};
\node (6) at (-3,4) {\small{$(\emptyset,w_3w_4)$}};
\node (7) at (-1,4) {\small{$(w_1,w_3)$}};
\node (8) at (1,4) {\small{$(w_1,w_4)$}};
\node (9) at (3,4) {\small{$(w_2,w_3)$}};
\node (10) at (5,4) {\small{$(w_2,w_4)$}};
\node (11) at (-3,6) {\small{$(w_1w_2,w_3)$}};
\node (12) at (-1,6) {\small{$(w_1w_2,w_4)$}};
\node (13) at (1,6) {\small{$(w_1,w_3w_4)$}};
\node (14) at (3,6) {\small{$(w_2,w_3w_4)$}};
\node (15) at (-1,8) {\small{$(w_3,w_2)$}};
\node (16) at (1,8) {\small{$\boldsymbol{(w_1w_2,w_3w_4)}$}};
\node (17) at (0,10) {\small{$(w_3,w_2w_4)$}};
\node (18) at (0,12) {\small{$\boldsymbol{(w_3,w_1w_2)}$}};
\draw (0) to (1);
\draw (0) to (2);
\draw (0) to (3);
\draw (0) to (4);

\draw (1) to (5);
\draw (1) to (7);
\draw (1) to (8);
\draw (2) to (5);
\draw (2) to (9);
\draw (2) to (10);
\draw (3) to (6);
\draw (3) to (7);
\draw (3) to (9);
\draw (4) to (6);
\draw (4) to (8);
\draw (4) to (10);

\draw (5) to (11);
\draw (5) to (12);
\draw (6) to (13);
\draw (6) to (14);
\draw (7) to (11);
\draw (7) to (13);
\draw (8) to (12);
\draw (8) to (13);
\draw (9) to (11);
\draw (9) to (14);
\draw (10) to (12);
\draw (10) to (14);

\draw (11) to (15);
\draw (11) to (16);
\draw (12) to (16);
\draw (13) to (16);
\draw (14) to (16);

\draw (15) to (17);
\draw (16) to (17);

\draw (17) to (18);
\end{tikzpicture}
\caption{The lattice for Example \ref{ejemplo 1}.}
\label{reticulado1}
\end{figure}
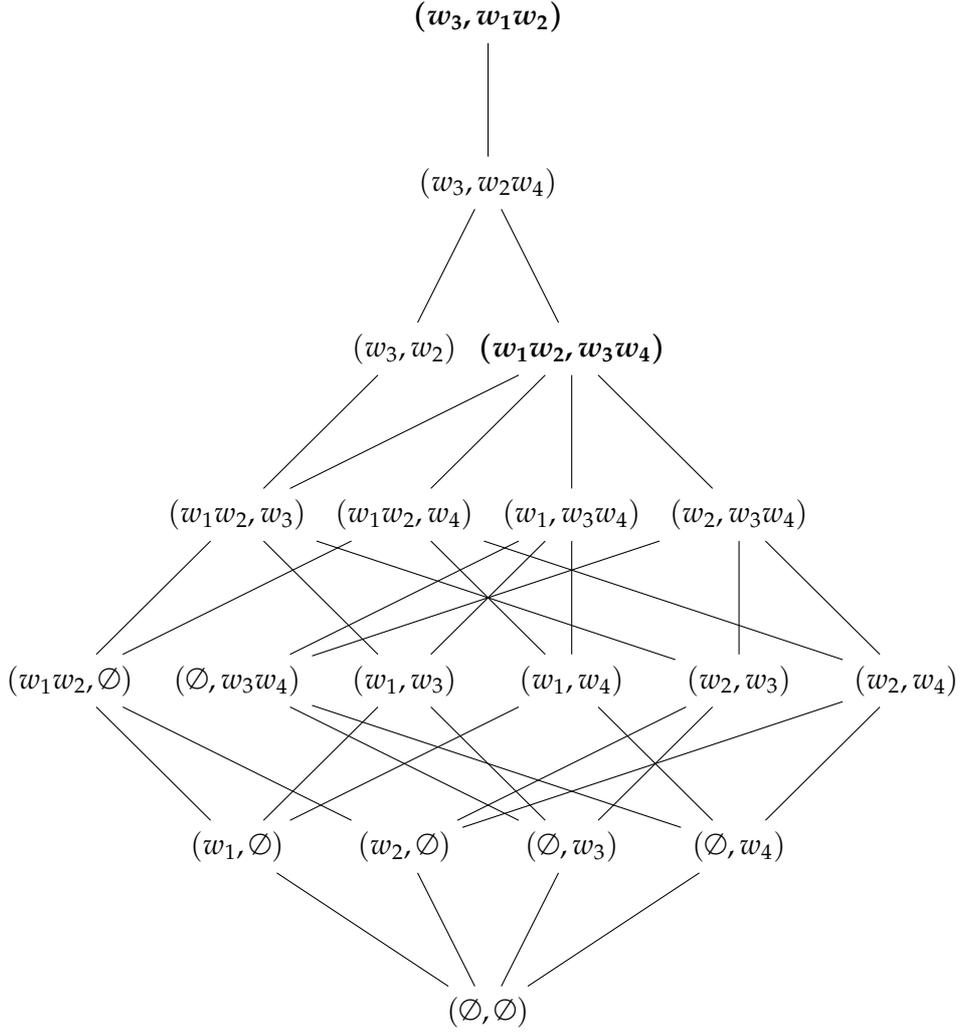
\noindent There are nineteen worker-quasi-stable matchings, two of which are stable: $(w_1w_2,w_3w_4)$ and $(w_3,w_1w_2).$ Now, we illustrate how to compute the join of two  worker-quasi-stable matchings. Take, for instance, $\mu'=(w_3,w_2)$ and $\overline{\mu}=(w_1w_2,w_3w_4).$  Observe that $$C_{f_1}\left(\mu'\vee\overline{\mu}(f_1)\right)=C_{f_1}\left(\{w_3\} \cup \{w_1,w_2\}\right)=\{w_3\}$$ and $$C_{f_2}\left(\mu'\vee\overline{\mu}(f_2)\right)=C_{f_2}\left(\{w_2\} \cup \{w_3,w_4\}\right)=\{w_2,w_4\}.$$
Then, $\mu'\vee\overline{\mu}=(w_3,w_2w_4).$ \hfill $\Diamond$
\end{example}

To finish this section, we compare our approach to construct a join with the one of \cite{wu2018lattice}. In their  Example 5.1, for two worker-quasi-stable-matchings, the authors  show the difficulties in defining a Conway-like join for firms even when firms have responsive preferences.\footnote{For a firm $f\in F$, preference $\succ_f$  is \textbf{responsive} if there is a quota $q_f$ such that, for each $S\subseteq W$:
\begin{enumerate}[(i)]
\item If $\vert{S}\vert>q_{f}$, we have $\emptyset\succ_{f}S$.
\item If $\vert{S}\vert<q_{f}$ and $w\not\in{S}$, we have $S\cup\{w\}\succ_{f}S \mbox{ if and only if } w\succ_{f}\emptyset.$
\item If $\vert S  \vert< q_{f}$ and $w,w'\not\in{S}$, we have 
$S \cup \{w\}\succ_{f} S\cup \{w'\} \mbox{ if and only if } w\succ_{f}w'.$
\end{enumerate} } In their example, there is only one firm $f$ with quota $q_f=3$ and the set of workers is $\{w_1,w_2,w_3,w_4\}$.  All workers rank $f$ above $\emptyset$ and  the ranking of single agents of $f$ is $w_1\succ_fw_2\succ_fw_3\succ_fw_4\succ_f \emptyset.$
Now, take the  worker-quasi-stable matchings $\mu=(w_1w_4)$ and $\mu'=(w_2w_3)$. Notice that the sets of agents assigned to $f$ in these two matchings are not comparable in a responsive manner, therefore the  Conway-like join can not be defined.
Here in this paper, we are able to sidestep this problem following Blair's insight. 
Now, we show how our results apply to their example.
According to our approach, we can compute the join of $\mu$ and $\mu'$ as follows:
$$\mu \vee \mu'(f)=C_{f}\left(\mu(f)\cup \mu'(f)\right)=C_{f}\left(\{w_1,w_2\} \cup \{w_2,w_3\}\right)=\{w_1,w_2,w_3\}.$$
Therefore, the resulting  worker-quasi-stable matching is $\mu\vee \mu'=(w_1w_2w_3).$ Furthermore, this matching is stable.

\section{Re-equilibration process via a  Tarski operator} \label{seccion operador }
As we mentioned in the introduction,  worker-quasi-stable matchings appear naturally in the short run when firms decide to downsize  or  new workers become available.
In this section, we define our Tarski operator in the worker-quasi-stable lattice that describes a possible re-equilibration process.  This process models how, starting from a worker-quasi-stable matching,  a decentralized sequence of offers in which unemployed workers are hired and cause new unemployments, produces a sequence of worker-quasi-stable matchings that converges to a stable matching.  In the first subsection, we present the operator, show some of its properties, and prove that the set of  its fixed points is the set of stable matchings. In the second subsection, we discuss the re-equilibration process, based on our Tarski operator, that models a layoff chain that leads towards a stable matching. 

\subsection{A Tarski operator for worker-quasi-stable matchings}\label{subsection operator T}

First, for each $\mu\in \mathcal{Q},$ define the following sets:$$\mathcal{B}^{\mu}=\{(f,w) \in F \times W : (f, w) \text{ blocks } \mu\},$$
$$\mathcal{B}^{\mu}_\star=\left\{(f,w) \in  \mathcal{B}^{\mu} : f \succ_w f' \text{ for each } (f',w) \in \mathcal{B}^{\mu}\setminus \{(f,w)\}\right\}$$
\noindent and, for each $f\in F$,
$$\mathcal{W}^\mu_f=\{ w \in W : (f,w) \in \mathcal{B}^{\mu}_\star\}.$$

\noindent The set $\mathcal{B}^{\mu}$ collects all possible blocking pairs for $\mu$ (since $\mu \in \mathcal{Q},$ $(f,w)\in \mathcal{B}^{\mu}$ implies  $\mu(w)=\emptyset$). Then,  each blocking pair $(f,w)$ is
included in $\mathcal{B}^{\mu}_{\star}$ if $f$ is $w$'s most preferred firm with which $w$ forms a blocking pair.  Lastly, for each $f\in F$, $\mathcal{W}^\mu_f$ gathers all workers that partner with $f$ in $\mathcal{B}_{\star}^{\mu}$. Notice that, if a firm $f$ is not involved in any blocking pair in $\mathcal{B}_{\star}^{\mu}$, then  $\mathcal{W}^\mu_f=\emptyset.$ Now, for each $\mu \in \mathcal{Q},$ our Tarski operator  $T$ maps $\mathcal{Q}$ into the set of matchings and  is defined as follows:  
\begin{enumerate}[(i)]
\item for each $f\in F$, $T(\mu)(f)=C_f \left(\mu(f) \cup \mathcal{W}^\mu_f\right)$
\item for each $w\in W,$ if there is $f\in F$ such that $w \in T(\mu)(f)$ then $T(\mu)(w)=f.$ Otherwise, $T(\mu)(w)=\emptyset.$
\end{enumerate}

\begin{remark}\label{remark T matching}
If  $w\in \mathcal{W}^\mu_f$, (i) $\mu(w)= \emptyset$ and, (ii) by definition of $\mathcal{B}^{\mu}_\star$,  $w\notin \mathcal{W}^\mu_{f'}$ for each $f'\in F\setminus \{f\}$.
\end{remark}
The following lemma shows that the operator $T$ is well defined: it assigns to a worker-quasi-stable matching  a matching.
 \begin{lemma}\label{T de mu es matching} 
For any worker-quasi-stable matching $\mu$,   $T(\mu)$ is a matching.
 \end{lemma} 
 \begin{proof}
 Let $\mu$ be a worker-quasi-stable matching. By definition of $T( \mu)$, $T(\mu)(f)\subseteq W$ for each $f\in F$ and $T(\mu)(w)\subseteq F$ for each $w\in W$. To show that $T(\mu)$ is a matching, it only remains to be  seen that $|\{f\in F: T(\mu)(w)=f\}|\leq1$ for each $w\in W$. Thus, assume that there are $w\in W$ and distinct $f$ and $f'$ in $F$ such that $w\in T(\mu)(f)$ and $w\in T(\mu)(f')$. By definition of $T(\mu)$ and choice function,   
\begin{equation}\label{eq1 T matching}
w\in C_f \left(\mu(f) \cup \mathcal{W}^\mu_f \right)\subseteq \mu(f) \cup \mathcal{W}^\mu_f.
\end{equation}
 Likewise,  \begin{equation}\label{eq2 T matching}
 w\in C_{f'} \left(\mu(f') \cup \mathcal{W}^\mu_{f'} \right)\subseteq \mu(f') \cup \mathcal{W}^\mu_{f'}.
 \end{equation}
 By \eqref{eq1 T matching}, there are two cases to consider:
\begin{enumerate}
\item[$\boldsymbol{1}.$] $\boldsymbol{w\in \mu(f)}.$ Since $\mu$ is a matching, $w\notin \mu(f')$. By \eqref{eq2 T matching}, $w\in \mathcal{W}^\mu_{f'}$ and, therefore, by Remark \ref{remark T matching} (i) $\mu(w)=\emptyset$. This contradicts that $w\in \mu(f)$. 
\item[$\boldsymbol{2}.$]$\boldsymbol{w\in \mathcal{W}^\mu_{f}}.$ By Remark \ref{remark T matching} (ii) $w\notin \mathcal{W}^\mu_{f'}$. By \eqref{eq2 T matching}, $w\in \mu(f')$. Thus, $\mu(w)\neq \emptyset$. This contradicts Remark \ref{remark T matching} (i).
\end{enumerate}
\end{proof}

The following theorem states that the  matching obtained by applying our operator to a worker-quasi-stable matching is: 
(i) worker-quasi-stable, (ii) weakly Blair-preferred by firms  to the original matching, and (iii) identical
to the original matching if and only if the original matching is stable.

\begin{theorem}  \label{Proposition punto fijo}
For any worker-quasi-stable matching $\mu,$ the following hold: 
\begin{enumerate}[(i)]
\item $T(\mu)$ is a worker-quasi-stable matching,
\item $T(\mu)\succeq ^B \mu$,
\item $T(\mu)=\mu$ if and only if $\mu$ is stable.

\end{enumerate}
\end{theorem}
\begin{proof} Let $\mu \in \mathcal{Q}.$ 
\begin{enumerate}[(i)]
\item \textbf{$\boldsymbol{T(\mu)}$ is a worker-quasi-stable matching.} 
Let $(f,w)$ be a blocking pair of $T(\mu)$. We want to see that $T(\mu)(w)=\emptyset.$ Assume there is $f'\in F$ such that $T(\mu)(w)=f'$. Then, by definition of $T(\mu)(f')$, either $w \in \mu(f')$ or $w \in \mathcal{W}_{f'}^\mu.$ 
\begin{enumerate}
\item[$\boldsymbol{1}.$] $\boldsymbol{w \in \mu(f')}$. Since $(f,w)$ blocks $T(\mu)$, $w\in C_f (T(\mu)(f)\cup \{w\}).$ By definition of $T(\mu)$ and \eqref{propiedad de choice}, $w\in C_f (\mu(f)\cup \mathcal{W}_{f}^\mu \cup \{w\})$. By substitutability,   $w\in C_f (\mu(f) \cup \{w\}).$
Also since $(f,w)$ blocks $T(\mu)$, $f\succ_w T(\mu)(w)=\mu(w)$. Thus $(f,w)$ blocks $\mu$ and $\mu(w)\neq \emptyset$, contradicting that $\mu\in \mathcal{Q}.$
\item[$\boldsymbol{2}.$] $\boldsymbol{w \in \mathcal{W}_{f'}^\mu}$. Thus, $(f',w) \in \mathcal{B}_\star^\mu$ and this implies that there is no firm $f'' \in F\setminus \{f'\}$ such that $f'' \succ_w f'$  and $(f'', w)$  blocks $T(\mu).$ In particular, when  $f''=f$, $(f,w)$ cannot be a blocking pair of $T(\mu)$,  contradicting our assumption.
\end{enumerate}
Hence,   $T(\mu)(w)=\emptyset$ and, therefore, $T(\mu)$ is a worker-quasi-stable matching. 

\item $\boldsymbol{T(\mu)\succeq ^B \mu.}$ By definition of $T(\mu)$ and  \eqref{propiedad de choice}, for each $f\in F,$
$$
C_f\left(T(\mu)(f)\cup \mu(f)\right)=C_f\left(C_f\left(\mu(f)\cup \mathcal{W}^\mu_f\right)\cup \mu(f)\right)$$
$$=C_f\left(\mu(f)\cup \mathcal{W}^\mu_f \cup \mu(f)\right)=C_f\left(\mu(f)\cup \mathcal{W}^\mu_f\right)=T(\mu)(f).$$ Therefore, $C_f\left(T(\mu)(f)\cup \mu(f)\right)=T(\mu)(f)$ for each $f\in F$, as desired.

\item $\boldsymbol{T(\mu)=\mu}$ \textbf{if and only if $\boldsymbol{\mu}$ is stable.}  Assume $\mu\in \mathcal{Q}\setminus \mathcal{S}$. Thus, $\mathcal{B}^{\mu}_\star \neq \emptyset$ and, therefore, there is $f\in F$  such that $\mathcal{W}^\mu_f \neq \emptyset$.  This implies that $T(\mu)(f)= C_f\left(\mu(f)\cup \mathcal{W}^\mu_f\right) \neq \mu(f).$ Hence, $T(\mu)\neq \mu.$

\noindent  Assume that $\mu\in \mathcal{S}$. Thus, $\mathcal{B}^{\mu}=\mathcal{B}^{\mu}_\star=\emptyset.$ By definition of $T,$ $T(\mu)(f)=\mu(f)$ for each $f\in F$ and therefore  $T(\mu)=\mu$.\end{enumerate}
\end{proof}

\noindent Notice that Theorem \ref{Proposition punto fijo} implies that $T$ is a Pareto improving operator for the firms by definition of the choice function.  Now, we prove that our operator $T$ is isotone. Recall that, for a lattice  $(\mathcal{L},\geq )$, a function $T:\mathcal{L}\longrightarrow\mathcal{L}$ is  \textit{isotone} if for each $x,y \in \mathcal{L}$, $x \geq y$ implies $T(x) \geq T(y).$

\begin{lemma}\label{T isotone}
If $\mu$ and $\mu'$ are worker-quasi-stable matchings such that $\mu \succeq^B  \mu'$, then $T(\mu) \succeq^B  T(\mu').$
\end{lemma}
\begin{proof}
Let $\mu, \mu' \in \mathcal{Q}$ be such that $\mu \succeq^B  \mu'$ and assume that $T(\mu) \succeq^B  T(\mu')$ does not hold. This implies the existence of $f \in F$ such that \begin{equation}\label{T1}
T(\mu)(f) \neq C_f \left(T(\mu)(f) \cup T(\mu')(f) \right). 
\end{equation}
Using the definition of $T$ and  \eqref{propiedad de choice} twice, it follows that 
%\begin{align*}
\begin{equation}\label{t1}
 C_f\left( T(\mu)(f) \cup T(\mu')(f)\right)=C_f\left(C_f \left(\mu(f)\cup \mathcal{W}^\mu_f \right) \cup C_f \left(\mu'(f)\cup \mathcal{W}^{\mu'}_f \right)\right)
\end{equation}
$$ =C_f\left(\mu(f)\cup \mathcal{W}^\mu_f  \cup C_f \left(\mu'(f)\cup \mathcal{W}^{\mu'}_f \right)\right)=C_f \left( \mu(f) \cup \mathcal{W}^\mu_f \cup \mu'(f)  \cup \mathcal{W}^{\mu'}_f \right) $$
$$=C_f \left( \left(\mu(f)\cup \mu'(f) \right) \cup \mathcal{W}^\mu_f \cup \mathcal{W}^{\mu'}_f \right) .$$
Using again \eqref{propiedad de choice}, it follows that 
\begin{equation}\label{t2}
 C_f \left( \left(\mu(f)\cup \mu'(f) \right) \cup \mathcal{W}^\mu_f \cup \mathcal{W}^{\mu'}_f \right)=C_f \left(C_f \left( \mu(f)\cup \mu'(f) \right) \cup \mathcal{W}^\mu_f \cup \mathcal{W}^{\mu'}_f \right)
\end{equation}
%\end{align*}
and, as by hypothesis $C_f(\mu(f) \cup \mu'(f))=\mu(f),$ using \eqref{t1} and \eqref{t2} we get 
\begin{equation}\label{T2}
C_f\left(T(\mu)(f) \cup T(\mu')(f)\right)=C_f \left( \mu(f) \cup \mathcal{W}^\mu_f \cup \mathcal{W}^{\mu'}_f\right).
\end{equation}
 Now, using the definition of $T$ and \eqref{T2},    \eqref{T1} becomes
 \begin{equation}\label{T3}
C_f\left( \mu(f) \cup \mathcal{W}^\mu_f \right)  \neq C_f\left( \mu(f) \cup \mathcal{W}^\mu_f \cup \mathcal{W}^{\mu'}_f \right).
 \end{equation}
 By \eqref{T3}, there is $w \in W$ such that 
\begin{equation}\label{T4}
w \in C_f\left( \mu(f) \cup \mathcal{W}^\mu_f \cup \mathcal{W}^{\mu'}_f \right)
\end{equation}
and 

\begin{equation}\label{T5}
w \in \mathcal{W}^{\mu'}_f \setminus \left(\mu(f) \cup \mathcal{W}^\mu_f \right).
\end{equation}
Notice that $w \in \mathcal{W}^{\mu'}_f$  implies $\mu'(w)=\emptyset.$ There are two cases to consider:
\begin{enumerate}
\item[$\boldsymbol{1}.$] $\boldsymbol{\mu(w)=\emptyset}.$ Since $w \in \mathcal{W}^{\mu'}_f,$ $(f,w) \in \mathcal{B}^{\mu'}_\star$ so $f \succ_w \mu'(w)=\emptyset.$  By \eqref{T4} and substitutability, $w \in C_f\left(\mu(f) \cup \{w\} \right),$ so $(f,w) \in \mathcal{B}^\mu.$ By \eqref{T5}, $w \notin \mathcal{W}^\mu_f$ and hence there is $f' \in F$ such that $(f',w) \in \mathcal{B}^\mu$ and $f' \succ_w f.$ Then, 
\begin{equation}\label{T6}
w \in C_{f'} \left( \mu(f')\cup \{w\}\right).
\end{equation}
By hypothesis, $\mu(f') \succeq^B_{f'} \mu'(f')$ and therefore $\mu(f')=C_{f'} \left( \mu(f') \cup \mu'(f')\right).$ Thus, using  \eqref{propiedad de choice} we can rewrite \eqref{T6} as 
\begin{equation}\label{T7}
w \in C_{f'} \left( \mu(f') \cup \mu'(f') \cup \{w\}\right).
\end{equation}
Hence, substitutability and \eqref{T7} imply $w \in C_{f'} \left(\mu'(f')\cup \{w\}\right)$ and therefore $(f',w) \in \mathcal{B}^{\mu'}.$ But $f' \succ_w f$ contradicts the fact that $w \in \mathcal{W}^{\mu'}_f.$      

\item[$\boldsymbol{2}.$] \textbf{There is $\boldsymbol{f' \in F$ such that $\mu(w)=f'}$}. By \eqref{T5}, $f'\neq f.$ Notice that, as $\mu \succeq^B  \mu'$,  $w \in \mu(f')=C_{f'}(\mu(f'))=C_{f'}  \left(\mu(f')\cup \mu'(f')\right)$ implies, by substitutability, that $w \in C_{f'}  \left(\mu'(f')\cup \{w\}\right).$ By \eqref{T5}, $w \in \mathcal{W}^{\mu'}_f$ and thus $(f,w) \in \mathcal{B}^{\mu'}_\star.$ If $f' \succ_w f,$ then $(f',w) \in \mathcal{B}^{\mu'}$ but this contradicts the fact  that $(f,w) \in \mathcal{B}^{\mu'}_\star.$ Therefore, 
\begin{equation}\label{T8}
f \succ_w f'=\mu(w) \succ_w \emptyset.
\end{equation}
By \eqref{T4} and substitutability, we have that $w \in C_f\left(\mu(f) \cup \{w\} \right).$ Then, \eqref{T8} implies that $(f,w) \in \mathcal{B}^\mu,$ 
contradicting the fact that $\mu$ is worker-quasi-stable since $\mu(w)=f'\neq \emptyset.$ 
\end{enumerate}
Since in each case we reach a contradiction, it follows that $$T(\mu)(f)=C_f \left(T(\mu)(f) \cup T(\mu')(f) \right)$$ for each $f \in F,$ which in turn implies  $T(\mu) \succeq^B  T(\mu').$
\end{proof}

Starting from any $\mu\in \mathcal{Q}$, the following example illustrates the construction of the sets $\mathcal{B}^{\mu}$, $\mathcal{B}^{\mu}_\star$, and $\mathcal{W}_{f}^{\mu}$  for each $f\in F$, in order to compute our operator $T$. Moreover, applying our operator $T$ (sometimes more than once) we obtain the set of fixed points of the  lattice of the set of  worker-quasi-stable matchings. In this particular example, the lattice of fixed points consists of the unique stable matching of the market.

\begin{example}\label{ejemplo 2}
Let $(W,F,\succ_W,C_F)$ be a matching market where  $W=\{w_1,w_2,w_3\}$, $F=\{f_1,f_2\}$, and the preferences for the workers are  is given by: $$\succ_{w_i}:\{f_1\},\{f_2\},\emptyset~~~~ \text{ for }i=1,2,3.$$ Moreover, the choice functions of the firms are given in Table \ref{tabla ejemplo 2}.
\begin{table}[h!]
\begin{center}
\begin{tabular}{|c|c|c|c|c|c|c|c|} \hline
&$\boldsymbol{123}$&$\boldsymbol{12}$&$\boldsymbol{13}$&$\boldsymbol{23}$&$\boldsymbol{1}$&$\boldsymbol{2}$&$\boldsymbol{3}$  \\ \hline \hline
$\boldsymbol{C_{f_1}}$&$12$&$12$&$13$&$23$&$1$&$2$&$3$  \\ \hline
$\boldsymbol{C_{f_2}}$&$3$&$\emptyset$&$3$&$3$&$\emptyset$&$\emptyset$&$3$  \\ \hline
\end{tabular}\label{tabla ejemplo 2}
\caption{Choice functions of the firms}
\end{center}
\end{table}
In Figure \ref{reticulado2}, we present the lattice of the set of worker-quasi-stable matchings.
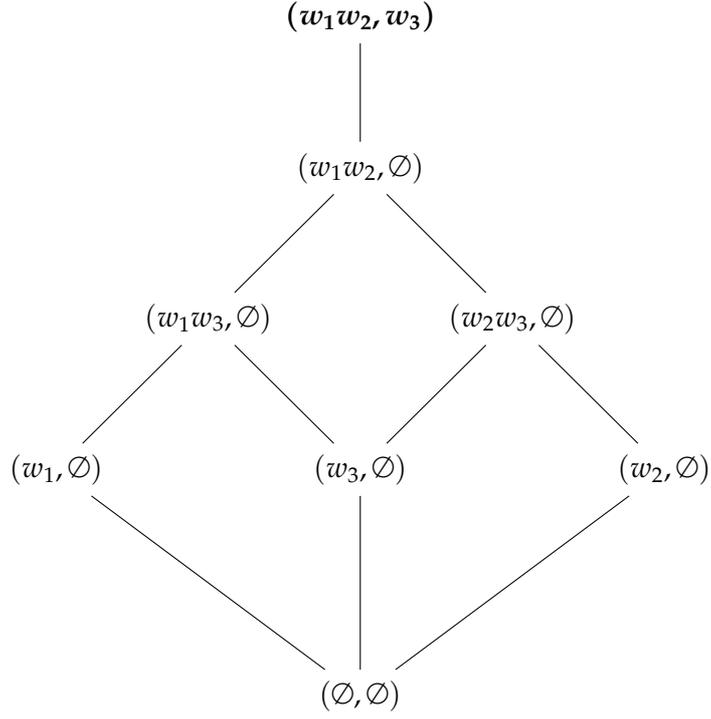
\begin{figure}[h!]
\begin{center}
\begin{tikzpicture}[scale=1]

\node (5) at (-4,-2) {\small{$(w_1,\emptyset)$}};
\node (6) at (4,-2) {\small{$(w_2,\emptyset)$}};
\node (7) at (0,-5) {\small{$(\emptyset,\emptyset)$}};

\node (0) at (0,-2) {\small{$(w_3,\emptyset)$}};
\node (1) at (-2,0) {\small{$(w_1w_3,\emptyset)$}};
\node (2) at (2,0) {\small{$(w_2w_3,\emptyset)$}};
\node (3) at (0,2) {\small{$(w_1w_2,\emptyset)$}};
\node (4) at (0,4) {\small{$\boldsymbol{(w_1w_2,w_3)}$}};

\draw (0) to (1);
\draw (0) to (2);
\draw (1) to (3);
\draw (2) to (3);

\draw (3) to (4);

\draw (7) to (5);
\draw (7) to (6);
\draw (7) to (0);
\draw (5) to (1);
\draw (6) to (2);

\end{tikzpicture}
\caption{The lattice of Example \ref{ejemplo 2}.}
\label{reticulado2}
\end{center}
\end{figure}
There are eight worker-quasi-stable matchings, one of which is stable: $\overline{\mu}=(w_1w_2,w_3).$
Now, we show how our operator $T$ works. Start, for instance, with  $\mu'=(w_2w_3,\emptyset).$
First, we construct the sets $\mathcal{B}^{\mu'}$, $\mathcal{B}^{\mu'}_\star$, $\mathcal{W}_{f_1}^{\mu'}$ and $\mathcal{W}_{f_2}^{\mu'}$ and compute $T(\mu'):$
$$\mathcal{B}^{\mu'}=\mathcal{B}^{\mu'}_\star=\{(f_1,w_1)\},~ \mathcal{W}_{f_1}^{\mu'}=\{w_1\} \text{ and }\mathcal{W}_{f_2}^{\mu'}=\emptyset,$$
$$T(\mu')(f_1)=C_{f_1}\left(\mu'(f_1) \cup \mathcal{W}_{f_1}^{\mu'}\right)=C_{f_1}\left(\{w_2,w_3\} \cup \{w_1\}\right)=\{w_1,w_2\},$$
$$T(\mu')(f_2)=C_{f_2}\left(\mu'(f_2) \cup \mathcal{W}_{f_2}^{\mu'}\right)=C_{f_2}\left(\emptyset\cup\emptyset \right)=\emptyset.$$
\noindent Therefore, $T(\mu')=\widetilde{\mu}=(w_1w_2,\emptyset).$ Notice that  matching $\widetilde{\mu}$ is not stable, since $(f_2,w_3)$ is a blocking pair.
Second, we construct the sets $\mathcal{B}^{\widetilde{\mu}}$, $\mathcal{B}^{\widetilde{\mu}}_\star$, $\mathcal{W}_{f_1}^{\widetilde{\mu}}$ and $\mathcal{W}_{f_2}^{\widetilde{\mu}}$ and compute $T^2(\mu')\equiv T(\widetilde{\mu}):$
$$\mathcal{B}^{\widetilde{\mu}}=\mathcal{B}^{\widetilde{\mu}}_\star=\{(f_2,w_3)\},~ \mathcal{W}_{f_1}^{\widetilde{\mu}}=\emptyset\text{ and }\mathcal{W}_{f_2}^{\widetilde{\mu}}=\{w_3\}, $$
$$T(\widetilde{\mu})(f_1)=C_{f_1}\left((\widetilde{\mu}(f_1) \cup \mathcal{W}_{f_1}^{\widetilde{\mu}}\right)=C_{f_1}\left(\{w_1,w_2\} \cup \emptyset\right)=\{w_1,w_2\},$$
$$T(\widetilde{\mu})(f_2)=C_{f_2}\left(\widetilde{\mu}(f_2) \cup \mathcal{W}_{f_2}^{\widetilde{\mu}}\right)=C_{f_2}\left(\emptyset\cup\{w_3\} \right)=\{w_3\}.$$
\noindent Therefore, $T(\widetilde{\mu})=(w_1w_2,w_3)=\overline{\mu}.$ It can be shown that, for each $\mu\in \mathcal{Q}$ such that $\widetilde{\mu} \succeq^B  \mu$, $T(\mu)=\widetilde{\mu}$ so $T^2(\mu)=\overline{\mu}.$ Furthermore, we can observe that the fixed point of our operator $T$ is the stable matching $\overline{\mu}.$
 \hfill $\Diamond$  
\end{example}

Note that in Example \ref{ejemplo 2}, the set of fixed points of operator $T$ consists of the unique stable matching of the market and we cannot observe its lattice structure. However, in Example \ref{ejemplo 1}, it can be shown that the set of fixed points of  $T$ is the set of stable matchings $\{(w_1w_2,w_3w_4),(w_3,w_1w_2)\}$ that forms a lattice (see Figure \ref{reticulado1}).
It should be clear from the previous example that an important global property of operator $T$ is that starting from any worker-quasi-stable matching we always reach stability in a finite numbers of iterations. 

Now we are in a position to give an alternative proof of the existence of stable matchings and their lattice structure. To this end, we apply Tarski's fixed point theorem to the lattice of  worker-quasi-stable matchings. Remember that Tarski's theorem \citep{tarski1955lattice} states that if $(\mathcal{L},\geq )$ is a complete lattice and $T:\mathcal{L}\longrightarrow\mathcal{L}$ is isotone,  then the set of fixed points of $T$ is non-empty and forms a complete lattice with respect to $\geq$.

\begin{theorem}
The set of stable matchings is non-empty and forms a lattice with respect to $\succeq ^B$.
\end{theorem}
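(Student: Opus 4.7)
The plan is to apply Tarski's fixed point theorem (Theorem \ref{Theorema de Tarski 1995}) directly to the operator $T$ acting on the worker-quasi-stable lattice. The previous sections have already set up every piece needed, so the proof reduces to assembling them.

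First, I would verify that the ambient poset $(\mathcal{WQS}(\succ),\succeq_F^B)$ is a complete lattice. This is immediate: by the preceding theorem the set of worker-quasi-stable matchings is a lattice under $\succeq_F^B$, and since it is finite, every subset has a join and a meet, so the lattice is complete. Next, I would invoke Proposition \ref{Proposition punto fijo}(i) to note that $T$ is a well-defined self-map $T:\mathcal{WQS}(\succ)\longrightarrow\mathcal{WQS}(\succ)$, and Proposition \ref{T isotone} to note that $T$ is isotone with respect to $\succeq_F^B$.

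With these two ingredients in place, Theorem \ref{Theorema de Tarski 1995} applies and yields that the set of fixed points of $T$ is non-empty and forms a complete lattice under $\succeq_F^B$. To close the argument, I would identify this fixed-point set with $\mathcal{S}(\succ)$ using Proposition \ref{Proposition punto fijo}(iii), which states precisely that $T(\mu)=\mu$ if and only if $\mu$ is stable. Therefore $\mathcal{S}(\succ)$ is non-empty and inherits the lattice structure from the fixed-point set.

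There is no real obstacle: the work has been front-loaded into Propositions \ref{Proposition punto fijo} and \ref{T isotone} and into the lattice structure of $\mathcal{WQS}(\succ)$. The only subtlety worth flagging explicitly is the passage from ``lattice'' to ``complete lattice'' required by Tarski's theorem, which is automatic here because $\mathcal{WQS}(\succ)$ is finite. The resulting lattice structure on $\mathcal{S}(\succ)$ is exactly the one induced by $\succeq_F^B$, matching the classical Blair-type order on stable matchings.
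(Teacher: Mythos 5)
Your proof is correct and follows essentially the same argument as the paper: apply Tarski's fixed point theorem to $T$ on the finite (hence complete) lattice $(\mathcal{WQS}(\succ),\succeq_F^B)$, using Proposition \ref{Proposition punto fijo}(i), Proposition \ref{T isotone}, and Proposition \ref{Proposition punto fijo}(iii) to identify the fixed points with $\mathcal{S}(\succ)$. No gaps.
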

\begin{proof}
Let us check that operator $T$ verifies the hypothesis of Tarski's theorem.  First, notice that the lattice of worker-quasi-stable matchings is finite and therefore complete. Second, by Theorem \ref{Proposition punto fijo} (i), $T$  maps the lattice of worker-quasi-stable matchings to itself. Finally, $T$ is isotone by Lemma \ref{T isotone}. Then, by Tarski's theorem, the set of fixed points of $T$ is non-empty and forms a lattice under $\succeq ^B$. Moreover, by Theorem \ref{Proposition punto fijo} (iii), the set of fixed points of our operator $T$ is the set of stable matchings. 
\end{proof}

Another non-constructive argument for proving the existence of stable matchings can be provided following the lines of  \cite{sotomayor1996non}. By using the notion of  simple matchings in a one-to-one model, that paper shows the existence of stable matchings. Since a worker-quasi-stable matching is a generalization of  a simple matching, our operator $T$ provides a similar proof in a many-to-one model. To see this, notice that as the set $\mathcal{Q}$ is non-empty and finite, there is a maximal element $\mu$ for the partial order $\succeq ^B.$ We want to see that $\mu$ is a stable matching. If $\mu$ is not stable, then there is a blocking pair $(f,w)$  for $\mu$ in which $\mu(w)=\emptyset.$ Thus, $\mathcal{B}^\mu_\star\neq \emptyset$ and $T(\mu)\neq \mu.$ Since  $T$ is  a Pareto improving operator for the firms,  $T(\mu)\succ ^B \mu$, contradicting the maximality of $\mu$. Therefore, $\mu$  is a stable matching.

\subsection{A re-equilibration process via layoff chains}

In a labor market, sometimes new workers arrive or  firms downsize and laid off workers have to look for employment elsewhere. If we start from a stable matching and some of the previous situations are considered, we can model   this disruption to stability as a  worker-quasi-stable matching. 
In order to restore stability,  a re-equilibration process  can be described as a layoff  chain dynamic within worker-quasi-stable matchings that brings the market back to stability.\footnote{The notion of layoff chain is the counterpart for the workers' side to the notion of vacancy chain  for the firms' side studied by \cite{blum1997vacancy} and \cite{wu2018lattice}.} Each stage of this process can be modeled by applying our operator $T$  to a worker-quasi-stable matching $\mu$, as follows: 
\begin{enumerate}[(i)]
\item When new desirable workers become available in the market (unemployed workers in $\mu$), they propose to the most preferred firm among those  they can form a blocking pair with ($\mathcal{B}_{\star}^{\mu}$ in Subsection \ref{subsection operator T}).
\item Then, each firm $f$ selects the most preferred  subset of workers among those who just proposed to it ($\mathcal{W}^{\mu}_f$ in Subsection \ref{subsection operator T}) and its current employees.
\item Once the firms select their new sets of employees, a new set of unemployed workers becomes available for new proposals (unemployed workers in $T(\mu)$).
\end{enumerate}
Notice that, by Theorem \ref{Proposition punto fijo} (i) the sequence of matchings generated by this process belongs to the set of worker-quasi-stable matchings. Moreover, each matching in the sequence Pareto improves (for the firms) upon the previous matching in the sequence. Therefore, by the finiteness of the set of worker-quasi-stable matchings, this process reaches a fixed point of $T$, which is a stable matching by Theorem \ref{Proposition punto fijo} (iii).  

Thus, when a stable matching becomes a worker-quasi-stable matching due to market changes, the aforementioned process models how  a decentralized sequence of offers (in which unemployed workers are hired, causing new unemployments) produces a sequence of worker-quasi-stable matchings that converges to a stable matching.

\section{Further results with the Law of Aggregate Demand}\label{mas resultados LAD}

In this section, by requiring an additional condition on firms' choice functions, we can describe more accurately the re-equilibration process by means of the lattice structure of the set of  worker-quasi-stable matchings. This additional condition is the ``law of aggregate demand", that says that when a firm chooses from an expanded set, it hires at least as many workers as before. Formally, 
\begin{definition}
Choice function $C_f$  satisfies the\textbf{ law of aggregate demand (LAD)} if $S'\subseteq S\subseteq W$ implies $|C_f(S')|\leq |C_f(S)|.$
\end{definition}
We know that, starting from a worker-quasi-stable matching and iterating our operator $T$, we reach a fixed point of $T$. Assuming \textit{LAD}, the lattice structure can help us to identify this fixed point: it is the join of the original worker-quasi-stable matching and the worker-optimal stable matching $\mu_W$.\footnote{The set of stable matchings under substitutable choice functions is very well-structured. It contains two distinctive matchings: the firm-optimal stable matching $\mu_F$ and the worker-optimal stable matching $\mu_W$. The  matching $\mu_W$ is unanimously considered by all workers to be the best among all stable matchings and by
all firms to be the pessimal stable matching \cite[see][for more details]{roth1984evolution,blair1988lattice}.}
To formally present this result, for $\mu \in \mathcal{Q}$, let $\mathcal{F}(\mu)$ denote the fixed point of $T$  obtained by iterating it starting at $\mu$.
\begin{theorem}\label{thLAD}
 Let $\mu$ be a worker-quasi-stable matching. If firms' choice functions satisfy \textit{LAD}, then  $\mathcal{F}(\mu)=\mu \vee \mu_W.$
\end{theorem}

In order to prove Theorem  \ref{thLAD}, we first need to show that the join of a worker-quasi-stable matching and a stable matching is a stable matching (Proposition \ref{V es estable}).
 However, this is not true without \textit{LAD}. In Example   \ref{ejemplo 1}, where $\mu_W=\overline{\mu}$, we can observe that for $\mu'=(w_3,w_2)$ and $\overline{\mu}=(w_1w_2,w_3w_4),$ $\mu'\vee\overline{\mu}=(w_3,w_2w_4)$ which is not stable since $(f_2,w_1)$ is a blocking pair for $\mu'\vee\overline{\mu}.$ 
In this example firm $f_1$'s choice function, although substitutable, does not satisfy \textit{LAD}. To see this, observe that $C_{f_1}\left(\{w_1,w_2\}\right)=\{w_1,w_2\}$ whereas $C_{f_1}\left(\{w_1,w_2,w_3\}\right)=\{w_3\}.$ Nevertheless, if we restrict  firms' choice functions to satisfy substitutability and \textit{LAD}, we can recover this result.

\begin{proposition}\label{V es estable}
Let $\mu$ be a worker-quasi-stable matching, and $\mu'$ be a stable matching. If firms' choice functions  satisfy \textit{LAD}, then $\mu\vee\mu'$ is a stable matching.
\end{proposition}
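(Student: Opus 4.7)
The plan is to argue by contradiction. Suppose $\lambda := \mu \vee \mu'$ is not stable. By Lemma \ref{lambda es wqs}, $\lambda \in \mathcal{WQS}(\succ)$, so any blocking pair $(f,w)$ for $\lambda$ must satisfy $\lambda(w)=\emptyset$. Using \eqref{propiedad de choice} on $\lambda(f)=Ch_f(\mu(f)\cup\mu'(f))$ together with substitutability, the blocking condition $w\in Ch_f(\lambda(f)\cup\{w\})$ yields $w\in Ch_f(\mu'(f)\cup\{w\})$; since $f\succ_w \emptyset$ and $\mu'$ is stable, this forces $\mu'(w)\neq\emptyset$. Set $w_0:=w$ and $g_0:=\mu'(w_0)$, so $w_0\in\mu'(g_0)\setminus\lambda(g_0)$.

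The main ingredient is a cardinality inequality. Since $\lambda\succeq_F^B \mu'$ by Lemma \ref{lema cota sup}, $\lambda(g)=Ch_g(\lambda(g)\cup\mu'(g))$ for every $g\in F$, so LAD gives $|\lambda(g)|\geq|\mu'(g)|$ and thus $|\lambda(g)\setminus\mu'(g)|\geq|\mu'(g)\setminus\lambda(g)|$ for every firm $g$. Using this, I would inductively construct a sequence $(w_k,g_k)_{k\geq 0}$ with $g_k:=\mu'(w_k)$, $\mu(w_k)=g_{k-1}$ for $k\geq 1$, $g_k\neq g_{k-1}$, and $w_k\in\mu'(g_k)\setminus\lambda(g_k)$. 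Given $w_k$, I pick $w_{k+1}\in\lambda(g_k)\setminus\mu'(g_k)$; because $\lambda(g_k)\subseteq\mu(g_k)\cup\mu'(g_k)$ and $w_{k+1}\notin\mu'(g_k)$, necessarily $\mu(w_{k+1})=g_k$. Substitutability then delivers $w_{k+1}\in Ch_{g_k}(\mu'(g_k)\cup\{w_{k+1}\})$, and since $\mu(w_{k+1})=g_k\succ_{w_{k+1}}\emptyset$ (by individual rationality of $\mu$), the stability of $\mu'$ rules out $(g_k,w_{k+1})$ as a blocker of $\mu'$, forcing $g_{k+1}:=\mu'(w_{k+1})$ to exist and satisfy $g_{k+1}\succ_{w_{k+1}}g_k$. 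Finally, $\lambda(w_{k+1})=g_k\neq g_{k+1}$ gives $w_{k+1}\in\mu'(g_{k+1})\setminus\lambda(g_{k+1})$, closing the induction.

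The delicate point, which I expect to be the main obstacle, is ensuring that $w_{k+1}$ can be chosen \emph{distinct} from $w_0,\ldots,w_k$ even when the firm sequence revisits an earlier firm. The bookkeeping is as follows: among the already-constructed workers, those lying in $\mu'(g_k)\setminus\lambda(g_k)$ are exactly $\{w_j:0\leq j\leq k,\ g_j=g_k\}$, while those in $\lambda(g_k)\setminus\mu'(g_k)$ are $\{w_{i+1}:0\leq i\leq k-1,\ g_i=g_k\}$, so the first count exceeds the second by exactly one (the index $j=k$). The LAD-derived cardinality inequality then leaves at least one unused element in $\lambda(g_k)\setminus\mu'(g_k)$ to serve as $w_{k+1}$. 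Consequently, the workers $w_0,w_1,w_2,\ldots$ are pairwise distinct, contradicting the finiteness of $W$ and completing the proof.
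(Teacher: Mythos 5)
Your proof is correct, and after the common opening it takes a genuinely different route from the paper's. Both arguments start identically: a blocking pair $(f,w)$ of $\lambda=\mu\vee\mu'$ must have $\lambda(w)=\emptyset$, and via \eqref{propiedad de choice} and substitutability one gets $w\in Ch_f\left(\mu'(f)\cup\{w\}\right)$, so stability of $\mu'$ forces $\mu'(w)\neq\emptyset$; both also rest on the same \emph{LAD}-derived inequality $|\lambda(g)|\geq|\mu'(g)|$. The difference is how the counting is used. The paper sums that inequality over all firms: since $w$ is employed under $\mu'$ but not under $\lambda$, total employment comparison immediately yields some $w'$ employed under $\lambda$ but unemployed under $\mu'$, and then $w'\in\lambda(f')=Ch_{f'}\left(\mu(f')\cup\mu'(f')\right)$ plus substitutability and individual rationality make $(f',w')$ a blocking pair of $\mu'$ --- contradiction in one step. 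You instead keep the inequality firm-by-firm and run a vacancy-chain-style induction: the fact that a worker in $\lambda(g)\setminus\mu'(g)$ cannot be unemployed under $\mu'$ (which is precisely the paper's terminal contradiction) is used by you to extend the chain, and your bookkeeping --- the used workers in $\mu'(g_k)\setminus\lambda(g_k)$ outnumber those in $\lambda(g_k)\setminus\mu'(g_k)$ by exactly one, so the per-firm inequality always leaves a fresh worker --- is correct, producing infinitely many distinct workers and contradicting finiteness of $W$. So your argument is sound but strictly longer; note that your own chain could be terminated at its first step by the paper's global count, since once you know $\sum_g|\lambda(g)|\geq\sum_g|\mu'(g)|$ and that $w_0$ is matched by $\mu'$ but not by $\lambda$, a worker of the kind you seek at stage $k=0$ with $\mu'$-status empty must already exist. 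A cosmetic point: when you invoke \emph{LAD} to get $|\lambda(g)|\geq|\mu'(g)|$ you implicitly use individual rationality of $\mu'$ to write $|Ch_g(\mu'(g))|=|\mu'(g)|$; this is worth stating, as the paper does.
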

\begin{proof}
Let $\mu$ be a worker-quasi-stable matching and $\mu'$ be a stable matching. Assume that $\mu\vee\mu'$ is not stable.  Thus there is a blocking pair $(f,w)$ such that $f\succ_w \mu \vee\mu'(w)$ and $w\in C_f\left(\mu\vee\mu'(f)\cup\{w\}\right).$ Since $\mu\vee\mu'$ is a worker-quasi-stable matching, $\mu\vee\mu'(w)=\emptyset$. By  (\ref{propiedad de choice}), $C_f\left(\mu\vee\mu'(f)\cup\{w\}\right)=C_f\left(\mu(f)\cup\mu'(f)\cup\{w\}\right)$. Then, by substitutability, $w\in C_f\left(\mu\vee\mu'(f)\cup\{w\}\right)$ implies  
\begin{equation}\label{ecu 2 lema V estable}
w\in C_f\left(\mu'(f)\cup\{w\}\right).
\end{equation}
If  $f\succ _{w} \mu'(w)$,  the pair $(f,w)$ blocks $\mu' ,$ contradicting the stability of $\mu'$. Therefore,  $\mu'(w)\succeq _{w} f$. Since $f\succ_w \mu \vee\mu'(w)= \emptyset$, the individual rationality of $\mu'$ implies that there is $\tilde{f}\in F$ (possibly $\tilde{f}=f$) such that $\tilde{f}= \mu'(w).$ Hence, $w\in \mu'(\tilde{f}).$ Then, 
\begin{equation}\label{ecu 3 lema V estable}
w\in \mu'(F),
\end{equation}
where $\mu'(F)= \bigcup _{g\in F}  \mu'(g)$. Also, since $\mu\vee\mu'(w)=\emptyset$, by the definition of  $\mu\vee\mu'$,
\begin{equation}\label{ecu 4 lema V estable}
w\notin \mu\vee\mu'(F).
\end{equation}
As $\mu'(g) \subseteq \mu(g) \cup \mu'(g)$ for each $g \in F,$  \emph{LAD} implies
$|C_g( \mu(g)\cup \mu'(g))|  \geq |C_g( \mu'(g))| .$
%\end{equation}
Therefore, by definition of $\mu \vee \mu'$ and the individual rationality of $\mu',$ 
$$ |\mu\vee\mu'(g)|  = |C_g( \mu(g)\cup \mu'(g))| \geq |C_g( \mu'(g))| = |\mu'(g) |$$
 for each $g \in F.$ Since $\mu\vee\mu'$ and $\mu'$ are matchings,
\begin{equation}\label{ecu 5 lema V estable}
\left| \mu\vee\mu'(F) \right|=\sum _{g\in F} | \mu\vee\mu'(g) | \geq \sum _{g\in F} | \mu'(g) |=\left| \mu'(F) \right|.
\end{equation}
By  (\ref{ecu 3 lema V estable}) and  (\ref{ecu 4 lema V estable}) $w\in \mu'(F)\setminus \mu\vee\mu'(F)$. This fact together with  (\ref{ecu 5 lema V estable}) imply that there is $w'\in W$ such that  $w'\in \mu\vee\mu'(F)\setminus \mu'(F)$. Then, $\mu'(w')=\emptyset $ and there is $f'\in F$ such that $w'\in C_{f'}\left(\mu(f')\cup \mu'(f')\right).$ By substitutability,  $w'\in C_{f'}\left(\mu'(f')\cup \{w'\} \right).$ Furthermore, individual rationality of $\mu'$ implies that $f'\succ _{w'} \emptyset$. Hence, the pair $(f',w')$ blocks $\mu'$. This contradicts the stability of $\mu'$. Therefore,  $\mu\vee\mu'$ is a stable matching.  
\end{proof}

The next corollary, that follows easily  from Proposition \ref{V es estable}, presents an important feature of the structure of the worker-quasi-stable matching set. It states that any worker-quasi-stable matching that Blair-dominates $\mu_W$ is actually a stable matching. This happens because the join between them, that is equal to the worker-quasi-stable matching, is a stable matching by Proposition \ref{V es estable}. 
\begin{corollary}
Let $\mu$ be a worker-quasi-stable matching. If  firms' choice functions satisfy \textit{LAD} and $\mu \succeq ^B\mu_W$, then $\mu$ is a stable matching.
\end{corollary}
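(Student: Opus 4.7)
The plan is to chain together the three previously established results \emph{Proposition \ref{thLAD}}, \emph{Lemma \ref{lema cota sup}}, and \emph{Proposition \ref{Proposition punto fijo}} (iii), using antisymmetry of the partial order $\succeq_F^B$ as the crucial link.

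First, I would show that the hypothesis $\mu \succeq_F^B \mu_W$ forces $\mu \vee \mu_W = \mu$. The direction $\mu \vee \mu_W \succeq_F^B \mu$ is immediate from Lemma \ref{lema cota sup}, which guarantees that the join dominates each of its components. For the reverse direction, I would use the defining property of the join as least upper bound: since $\mu \succeq_F^B \mu_W$ and trivially $\mu \succeq_F^B \mu$, the matching $\mu$ is itself an upper bound of $\{\mu,\mu_W\}$, so $\mu \succeq_F^B \mu \vee \mu_W$. Antisymmetry of $\succeq_F^B$ then yields $\mu \vee \mu_W = \mu$.

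Next, I would invoke Proposition \ref{thLAD}, which under \textit{LAD} identifies the fixed point reached by iterating $T$ from $\mu$ with $\mu \vee \mu_W$. Combining with the previous paragraph gives $\mathcal{F}(\mu) = \mu \vee \mu_W = \mu$, so $\mu$ is itself a fixed point of $T$. Finally, Proposition \ref{Proposition punto fijo} (iii) characterizes the fixed points of $T$ as precisely the stable matchings, so $\mu \in \mathcal{S}(\succ)$, as required.

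There is essentially no technical obstacle here; the only subtlety is making sure one needs both inequalities ($\mu \vee \mu_W \succeq_F^B \mu$ from Lemma \ref{lema cota sup} and $\mu \succeq_F^B \mu \vee \mu_W$ from the least-upper-bound property) and then appeals to antisymmetry. The use of \textit{LAD} enters only through Proposition \ref{thLAD}, which is where the substantive content lives; the present corollary merely repackages that proposition by observing that when $\mu$ already dominates $\mu_W$ in Blair's order, the join collapses to $\mu$ itself and the re-stabilization process has nowhere to go.
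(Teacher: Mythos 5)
Your proof is correct and follows essentially the same route as the paper: establish $\mu \vee \mu_W = \mu$, invoke Proposition \ref{thLAD} to get $\mathcal{F}(\mu)=\mu$, and conclude stability from the fixed-point characterization. The only cosmetic difference is that the paper gets $\mu\vee\mu_W=\mu$ directly from the definition of $\succeq_F^B$ (since $\mu(f)=Ch_f\left(\mu(f)\cup\mu_W(f)\right)$ for each $f$), whereas you derive it via the least-upper-bound property and antisymmetry, which is equally valid.
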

 
 This corollary makes a deep connection between worker-quasi-stable matchings and stable matchings. Notice that, when choice functions satisfy substitutability and \textit{LAD}, the set of stable matchings has a dual lattice structure \citep[see for instance][for more details]{alkan2002class}. Therefore, by duality, $\mu_W$ is the firm-pessimal stable matching. The implications of this are twofold:  (i) a worker-quasi-stable matching that Blair-dominates \textit{any} stable matching is also stable, and (ii) a matching that is worker-quasi-stable but not stable is  either Blair-incomparable to or  Blair-dominated by $\mu_W.$

Now we are in a position to prove the main result of this section.\smallskip

\noindent \begin{proof}[Proof of Theorem \ref{thLAD}]
Let $\mu \in \mathcal{Q}.$ By Proposition \ref{V es estable}, $\mu \vee \mu_W$ is a stable matching. By Lemma \ref{T isotone} and Theorem \ref{Proposition punto fijo} (iii),  $\mu \vee \mu_W \succeq ^B \mu$ implies that $\mu \vee \mu_W \succeq ^B \mathcal{F}(\mu)$. Moreover, given that $T$ is a weakly Pareto improving operator by Theorem \ref{Proposition punto fijo} (ii), $\mathcal{F}(\mu) \succeq ^B \mu$. Since $\mu_W$ is the firms' pessimal stable matching, $\mathcal{F}(\mu) \succeq ^B \mu_W.$ As $\mathcal{F}(\mu) \succeq ^B \mu$ and  y $\mathcal{F}(\mu) \succeq ^B \mu_W,$ by definition of join,  $\mathcal{F}(\mu) \succeq ^B \mu \vee\mu_W$. Thus, by antisymmetry, $\mathcal{F}(\mu) = \mu \vee\mu_W.$
\end{proof}

The following proposition provides an upper bound for the total number of workers hired in any worker-quasi-stable matching by a firm. 
If firms' choice functions satisfy \textit{LAD}, this number will not exceed the total number of workers matched by that firm in any stable matching.
\begin{proposition}
Let $\mu$ be a worker-quasi-stable matching and let $\mu'$ be a stable matching. If firms' choice functions satisfy \textit{LAD}, $|\mu(f)|\leq |\mu'(f)|$ for each $f\in F.$
\end{proposition}
\begin{proof}
Let $\mu \in \mathcal{Q}$ and $f \in F.$ By \emph{LAD} and individual rationality of $\mu,$ we have that $|T(\mu)(f)|=|C_f(\mu(f) \cup \mathcal{W}^\mu_f)|\geq|C_f(\mu(f))| = |\mu(f)|.$ Iterating we obtain that $|\mathcal{F}(\mu)(f)| \geq |\mu(f)|.$ By definition,  $\mathcal{F}(\mu)$ is a fixed point of our Tarski operator. Thus, $\mathcal{F}(\mu)$ is stable by Theorem \ref{Proposition punto fijo} (iii). Then, by the Rural Hospital Theorem\footnote{The \textit{Rural Hospital Theorem} is proven in different contexts by many authors \citep[see][among others]{mcvitie1970stable,roth1984evolution,roth1985college,martinez2000single,alkan2002class,kojima2012rural}.
The version of this theorem for a many-to-many matching market where all agents have substitutable choice functions satisfying \emph{LAD}, that also applies in our setting, is presented in \cite{alkan2002class} and states that each agent is matched with the same number of partners in every stable matching.} $|\mathcal{F}(\mu)(f)|=|\mu'(f)|$ for each $\mu' \in \mathcal{S}$ and each $f \in F.$ Therefore,  $|\mu(f)|\leq|\mathcal{F}(\mu)(f)|=|\mu'(f)|$ for each $\mu' \in \mathcal{S}$ and each $f \in F.$ 
\end{proof}

Finally, even though the results that we obtain in this section under substitutability and \textit{LAD} are analogous to those obtained by \cite{wu2018lattice} for firm-quasi-stable matchings under responsive preferences (their Lemma 3.11, Theorem 3.12, and Corollaries 4.1 and 4.2), there are important differences. One of them is that our setting is more general, since each substitutable choice function induces a substitutable preference, and the class of substitutable preferences  contains the class of responsive preferences. Another one is that their method of proof relies on the results established by \cite{blum1997vacancy}, that under responsiveness link many-to-one models with one-to-one models, while ours is  independent and based on Blair's approach under substitutability.

\section{Conclusions}\label{concludings}

This paper presents the set of worker-quasi-stable matchings as an  extension of the set of stable matchings for a many-to-one matching model. We show that the set of worker-quasi-stable matchings is well-structured: it forms a full lattice with respect to Blair's partial order. Relying on the lattice structure of this set,  a  re-equilibration process that models a layoff chain and that converges to stability is also described. Notice that even though we explicitly construct the join between any pair of worker-quasi-stable matchings,  how to compute the meet is still  an open question.  
 
The results in this paper can be generalized straightforwardly to a model of matching with contracts as in \cite{hatfield2005matching}. This extension does not present any difficulties other than the notational ones proper to this setting.

 It is usual in the literature to study many-to-one models assuming that firms' preferences are responsive. This is due to the close relation between this model with responsive preferences and the one-to-one model \citep[for a thorough survey on this fact, see][]{roth1992two}.  However, when firms are endowed with substitutable choice functions (a much less restrictive requirement), this relation with the one-to-one model no longer holds. Then, to extend results to the many-to-one model under the substitutable assumption,  Blair's partial order becomes crucial.  We believe that to exploit  Blair's approach  for extending other known results in models with responsive preferences to models with substitutable choice functions is a natural direction to pursue further research.

\bibliographystyle{ecta}

\end{document}